\documentclass[sigconf]{acmart}

\usepackage{booktabs}
\usepackage{multirow}
\usepackage{listings}
\usepackage{balance}
\usepackage{graphicx}
\usepackage{tikz}
\usetikzlibrary{arrows.meta,calc,positioning,decorations.pathreplacing}
\usepackage{pifont}

\acmConference[SACMAT '26]{31st ACM Symposium on Access Control Models and Technologies}{July 8--10, 2026}{Waterloo, Canada}
\acmYear{2026}
\acmDOI{}
\acmISBN{}
\setcopyright{rightsretained}
\acmPrice{}

\begin{document}

\title{ProcRoute: Process-Scoped Authorization of Split-Tunnel Routes}

\author{Arul Thileeban Sagayam}
\affiliation{%
  \institution{Bloomberg}
  \city{New York}
  \country{USA}}
\email{asagayam@bloomberg.net}

\begin{abstract}
In most split-tunnel VPN/ZTNA deployments, installing an internal route
authorizes the entire device, not a specific application, to use it.
An unprivileged malicious process can therefore reach internal
services by reusing routes intended for corporate applications.

We present \textbf{ProcRoute}, a system that restricts
internal-route access to explicitly authorized applications.
ProcRoute models route access as an access-control problem:
application identities are principals,
destination prefixes with port and protocol constraints are
resources, and a total, default-deny decision function mediates
every \texttt{connect()} and UDP \texttt{sendmsg()} to an internal
destination.
Processes without a grant retain external access but are denied
internal routes under our threat model.

We describe ProcRoute's formal model, a Linux
prototype built on cgroup~v2 and eBPF socket-address hooks, and
two complementary evaluations.
In a two-machine WireGuard deployment,
ProcRoute matches the WireGuard baseline and
13\% faster than an nftables cgroup-matching configuration, with a
p50 connect latency of 93\,\textmu{}s (+3.6\,\textmu{}s over
baseline), flat policy scaling to 5\,000 prefixes, and
sub-millisecond revocation.
Single-machine loopback microbenchmarks confirm low hook overhead:
2.7\,\textmu{}s on the internal-allow path,
82/82 unauthorized pivot attempts blocked, and zero
transient allows across 1.2~million connection attempts during
policy reload.
\end{abstract}

\begin{CCSXML}
<ccs2012>
   <concept>
       <concept_id>10002978</concept_id>
       <concept_desc>Security and privacy</concept_desc>
       <concept_significance>500</concept_significance>
       </concept>
   <concept>
       <concept_id>10002978.10003014</concept_id>
       <concept_desc>Security and privacy~Network security</concept_desc>
       <concept_significance>500</concept_significance>
       </concept>
 </ccs2012>
\end{CCSXML}

\ccsdesc[500]{Security and privacy}
\ccsdesc[500]{Security and privacy~Network security}

\keywords{remote access, split tunneling, eBPF, VPN, zero trust}

\maketitle

\section{Introduction}
Remote work has made remote access VPNs and ZTNA agents default infrastructure. Many organizations enable split tunneling so endpoints can access local networks and the public Internet efficiently while still reaching internal resources. However, split tunneling weakens the boundary between ``trusted'' corporate routes and ``untrusted'' external networks: the endpoint becomes a bridge, and internal routes are trivially reachable by any local process~\cite{mitre_lateral}. Split tunneling persists in most enterprise deployments despite guidance discouraging it~\cite{nist_sp80046r2,nist_sp80053r5,nsa_cisa_vpn}, because full-tunnel routing imposes unacceptable latency, capacity costs, and user friction at scale; yet no standard mechanism on commodity desktop OSes restricts which processes may use VPN-installed routes.

This is well-understood. Guidance around VPN deployments explicitly calls out the risk of split-tunnel configurations that allow a host to communicate to the Internet while simultaneously connected to organizational resources, since the external connection can be used to attack or infiltrate the VPN-connected environment~\cite{nist_sp80046r2}.  Nevertheless, in practice enterprises use split tunneling for usability and throughput, particularly when only a subset of applications truly need internal access.

We argue that the core issue is \emph{granularity}. Enterprise
remote access makes a decision at the device level,
thus allowing routes to internal prefixes, and the operating system
then allows any application to use those routes.
Many endpoint threats  manifest as \emph{separate
unprivileged processes} (phishing-launched payloads, malicious
installers, adware bundles, and commodity malware running as the user)
that are distinct from any legitimate corporate
application yet inherit the same VPN routes.
A second class of threats, in-process code execution via browser
extensions, library injection, or memory corruption, operates within
an already-running application and is not distinguishable at process
granularity; we scope this class out of ProcRoute's threat model and
analyze it as a limitation later.
For the separate-process class, route authorization at process
granularity is a feasible enforcement boundary that current
split-tunnel deployments do not provide.

\textbf{ProcRoute} introduces process-scoped route authorization:
only processes bound to an authorized principal may reach internal
prefixes.
Other processes are denied access to internal routes but are not
restricted on external destinations, preserving split-tunnel Internet
access.
ProcRoute is designed to coexist with existing split-tunnel
configurations and to impose low overhead on authorized connections.
\subsection{Contributions}
Commodity OSes provide basic packet-filtering primitives, but they
don't treat VPN installed routes as resources, and no existing
tool provides formal policy semantics for this enforcement.
ProcRoute addresses this gap with three contributions:
(1)~A \emph{formal access-control model} that treats application
identities as principals, destination prefixes as resources, and
enforces default-deny through a total decision function with six
stated invariants, enabling security guarantees
(e.g., monotone composition, non-widening updates) that cannot be
stated for ad-hoc filter configurations
(Section~\ref{sec:model}).
(2)~\emph{Composable enforcement}: attaching the BPF program with
\texttt{BPF\_F\_ALLOW\_MULTI} yields monotone composition: descendant
cgroups can restrict but never expand the ancestor's allow set
(Section~\ref{sec:composition}).
(3)~A \emph{split client/gateway architecture} for WireGuard,
evaluated on a two-machine deployment and single-machine
loopback microbenchmarks
(Sections~\ref{sec:design}--\ref{sec:eval},
Appendix~\ref{sec:appendix:micro}).
\subsection{Delta vs.\ nftables and systemd}
Existing tools(nftables, systemd) can approximate per-process filtering,
but they are low-level packet-filter primitives.
ProcRoute adds formal policy semantics, monotone composition,
structured deny auditing, and optional per-binary hash verification
(detailed in Sections~\ref{sec:model}--\ref{sec:proc-identity}).

\section{Threat Model and Goals}
\label{sec:threat}
\subsection{Threat model}
Procroute is designed to reduce risk when an adversary obtains code execution as a \emph{separate, unprivileged process}~\cite{mitre_lateral}:
\begin{itemize}
  \item \textbf{Adversary:} a process running as the user that is \emph{distinct} from any authorized corporate application, for example a phishing-launched payload or adware executing as its own process.
  \item \textbf{Objective:} laterally move into internal networks by using VPN-installed routes.
  \item \textbf{Assumptions:} the OS kernel and ProcRoute enforcement are not compromised; the adversary does not have root/admin privileges sufficient to disable system security policy or load competing eBPF programs.
\end{itemize}

We explicitly \textbf{do not} claim to defend against:
\begin{itemize}
  \item a kernel-level adversary or a local admin who can remove the enforcement agent, or
  \item \emph{in-process code execution} within an already-authorized
    application (e.g., a malicious browser extension or exploitation of
    a memory-corruption vulnerability in an authorized binary).
    In-process compromise inherits the victim process's cgroup
    membership and therefore its network grants; ProcRoute cannot
    distinguish legitimate from injected code within a single address
    space. This limitation is inherent to any enforcement boundary
    drawn at process granularity.
\end{itemize}

\subsection{Goals}
ProcRoute is designed around four goals:
\begin{enumerate}
  \item \textbf{Process-level least privilege:} internal routes should be usable only by explicitly authorized applications.
  \item \textbf{Split-tunnel compatible:} preserve direct Internet access for the rest of the system and avoid forcing all traffic through a VPN gateway.
  \item \textbf{Low overhead:} enforcement should add minimal latency/CPU overhead for allowed flows.
  \item \textbf{Auditable decisions:} policy should be readable and decisions should be loggable for incident response and compliance.
\end{enumerate}

\section{ProcRoute Access-Control Model}
\label{sec:model}

We now formalize ProcRoute as an access-control system, defining
principals, resources, the policy language, the enforcement decision
function, and the security invariants the system must maintain.

ProcRoute supports two deployment modes that share the same
principal-binding and grant-evaluation machinery but differ in
where the deny verdict is applied:
\begin{itemize}
  \item \textbf{Local-only enforcement.}
    Cgroup socket-address hooks evaluate the decision function
    directly on the endpoint, returning \texttt{EPERM} for
    unauthorized connections
    (Sections~\ref{sec:decision}--\ref{sec:invariants}).
  \item \textbf{Split client/gateway enforcement.}
    The same cgroup hooks resolve the originating principal on
    the client but \emph{tag} the socket instead of denying;
    a cooperating gateway enforces per-peer policy over a
    WireGuard tunnel
    (Section~\ref{sec:model:split}).
\end{itemize}
Both modes may operate simultaneously: local enforcement
provides immediate \texttt{EPERM} feedback, while gateway
enforcement adds a second, network-level enforcement point.

\subsection{Principals}

A \emph{principal} in ProcRoute is an application identity
$p \in \mathcal{P}$, where $\mathcal{P}$ is a finite set of
identifiers defined by administrative policy.
Each principal corresponds to a logical application that may
comprise one or more OS processes (e.g., a browser and its helper
processes).

Principal binding is realized through cgroup membership.
Let $\mathcal{C}$ denote the set of all cgroup nodes in the
hierarchy.  The ProcRoute controller creates a dedicated cgroup
subtree for each $p \in \mathcal{P}$ and moves the application's
processes into it.  We define two total functions:
\begin{itemize}
  \item $\mathit{cg}: \mathit{PID} \to \mathcal{C}$ maps every
    live process to its unique cgroup,
  \item $\mathit{bind}: \mathcal{C} \to \mathcal{P} \cup \{\bot\}$
    maps each cgroup to its bound principal, or to $\bot$ if the
    cgroup has no binding.  Only finitely many cgroups map to a
    value in $\mathcal{P}$; all others map to $\bot$.
\end{itemize}
The effective principal of a process is the composition
$\mathit{prin}(\mathit{pid}) = \mathit{bind}(\mathit{cg}(\mathit{pid}))$.
Because both $\mathit{cg}$ and $\mathit{bind}$ are total,
$\mathit{prin}$ is total: every process has a well-defined
principal (or $\bot$).

A process whose cgroup has no binding
($\mathit{prin}(\mathit{pid}) = \bot$) is \emph{unauthorized} and
receives no grants.  This is the common case: the vast majority
of cgroups on a system are unbound, and the vast majority of
processes are therefore unauthorized with respect to ProcRoute.

\paragraph{Policy updates.}
$\mathcal{P}$, $\mathcal{I}$, and $\mathcal{R}$ are parameters of
a policy \emph{instance}.  The controller may atomically replace
the contents of the BPF maps to install a new policy instance
(e.g., when the VPN session reconnects with different route
grants).  The model treats each instance as a static
configuration; transitions between instances are handled by the
controller and fall outside the formal model here.

\paragraph{Revocation semantics.}
Revoking access for principal~$p$ at time~$t$ involves two
guarantees of decreasing strength:
\begin{itemize}
  \item \textbf{G1 (connect-time):}
    Every \texttt{connect()} or first \texttt{sendmsg()}
    call issued by~$p$ after time~$t$ to a destination
    outside $R_p^{\,\prime}$ (the updated grant set) is
    denied.
  \item \textbf{G2 (data-path):}
    Every \emph{existing} socket opened by~$p$ to a
    destination in $R_p \setminus R_p^{\,\prime}$ is
    terminated within a bounded interval~$\delta$
    after time~$t$.
\end{itemize}
G1 is immediate whenever the BPF maps are updated, because each
hook re-evaluates the policy on every call
(Section~\ref{sec:decision}).
G2 requires an \emph{authorization epoch}: the controller
increments a global epoch counter $\varepsilon$ in a BPF array
on each policy update.
Sockets authorized at epoch~$\varepsilon' < \varepsilon$ are
\emph{stale}.  The prototype implements two revocation paths
for stale sockets:
(a)~the \texttt{sendmsg} hook checks the epoch recorded per
socket cookie and denies further sends (immediate for UDP);
(b)~a background sweeper terminates stale TCP sockets via
\texttt{ss~--kill} (\texttt{SOCK\_DESTROY}) at a configurable
interval.

\subsection{Resources}

A \emph{resource} is a network destination reachable via a
split-tunnel route.  Each resource is a tuple
\[
  r = \langle \mathit{prefix},\; \mathit{proto},\;
        [\mathit{port\_lo},\, \mathit{port\_hi}] \rangle
\]
where $\mathit{prefix}$ is an IPv4 or IPv6 CIDR prefix,
$\mathit{proto} \in \{\mathrm{TCP},\, \mathrm{UDP},\, \mathord{*}\}$
specifies the transport protocol ($\mathord{*}$ matches both TCP
and UDP), and
$[\mathit{port\_lo},\, \mathit{port\_hi}]$ is an inclusive port range.
The sentinel value $[0,0]$ denotes ``all ports'': when
$\mathit{port\_lo} = 0$, the port constraint is vacuously
satisfied regardless of the destination port.
The protocol and port wildcards compose independently:
$\langle \mathit{prefix},\, \mathord{*},\, [0,0] \rangle$ covers
every TCP and UDP connection to any port within
$\mathit{prefix}$.

Resources are \emph{flat}: there is no hierarchy or subsumption
among them; the policy is simply a set of grants.  Overlapping prefix ranges are not given any special
precedence in the policy language (coverage is defined by the covering
relation below).  For efficient kernel enforcement via a single LPM
lookup, ProcRoute additionally assumes a well-formed,
destination-disjoint normalized grant set.

\paragraph{Overlaps and LPM resolution.}
\label{sec:lpm-overlaps}
The formal decision function permits multiple grants
in $R_p$ to cover the same destination tuple, yielding a union-of-grants semantics.
The kernel prototype uses a per-principal longest-prefix-match (LPM) trie for
efficiency (Section~\ref{sec:enforcement-pipeline}) and therefore obtains at most
one prefix match for a given destination IP.
To reconcile these, ProcRoute assumes (and the controller can validate) a
\emph{well-formed} grant set: after normalizing duplicates, each principal's
resources are \emph{destination-disjoint}: for any destination IP~$d$, at most
one $r\in R_p$ has $d\in r.\mathit{prefix}$.
Overlapping entries with identical protocol/port constraints are merged; overlaps
with incompatible constraints are rejected as ambiguous.
This rules out policies that assign different port constraints to
overlapping prefixes for the same principal (e.g., \texttt{10.0.0.0/8}
on port~443 and \texttt{10.1.0.0/16} on port~22); enterprise firewall
audits suggest 15--25\,\% of zone-based rulesets contain such
entries~\cite{wool2004quantitative,wool2010trends}.
A \emph{cascaded trie} extension that lifts this restriction with one
additional map lookup is described in Appendix~\ref{sec:appendix:q6}.
Under the destination-disjoint condition, the existential check $\exists r\in R_p:\mathit{dst}\sqsubseteq r$
is equivalent to a single LPM lookup followed by the matched rule's
protocol/port predicate.

We write $\mathit{dst} \sqsubseteq r$ to indicate that a concrete
connection attempt with destination IP~$d$, protocol~$t$, and
port~$q$ is \emph{covered} by resource~$r$:
\begin{multline*}
  \mathit{dst} \sqsubseteq r \;\iff\;
    d \in r.\mathit{prefix}
    \;\land\;
    (r.\mathit{proto} = \mathord{*} \lor t = r.\mathit{proto})\\
    \;\land\;
    (r.\mathit{port\_lo} = 0
      \lor r.\mathit{port\_lo} \le q \le r.\mathit{port\_hi})
\end{multline*}

\subsection{Internal route set}

The administrator defines a finite set of internal prefixes
$\mathcal{I} = \{i_1, i_2, \ldots, i_n\}$, each an IP CIDR block.
A destination IP~$d$ is \emph{internal} iff
$d \in \bigcup_{i \in \mathcal{I}} i$.
Destinations outside $\mathcal{I}$ are considered external and are
not subject to ProcRoute enforcement; they pass through
unconditionally to preserve split-tunnel Internet access.

\subsection{Policy}
\label{sec:policy}

A ProcRoute policy $\Pi$ is a triple
$\Pi = \langle \mathcal{I},\; \mathcal{P},\; \mathcal{R} \rangle$,
where:
\begin{itemize}
  \item $\mathcal{I}$ is the internal route set (set of CIDR prefixes),
  \item $\mathcal{P}$ is the set of application principals, and
  \item $\mathcal{R}: \mathcal{P} \to 2^{\mathit{Resource}}$ assigns
    each principal a (possibly empty) set of allowed resources.
\end{itemize}

We write $R_p = \mathcal{R}(p)$ for the allow set of principal~$p$.

ProcRoute's policy language is \emph{purely positive}: grants
authorize access; there are no deny rules, exceptions, or negative
grants.  A principal's effective access is exactly
$\bigcup_{r \in R_p} \{ \mathit{dst} \mid \mathit{dst} \sqsubseteq r \}$, the
union of destinations covered by its grants.  There is no mechanism
to deny a sub-range within a broader grant; restricting a
principal's access requires narrowing its grant set.

The policy is \emph{default-deny}: this follows directly from the
structure of the decision function.
A connection to an internal destination is permitted only if the
process is bound to a principal and an explicit matching grant
exists.  In the absence of any grant, the decision function
returns $\mathit{false}$, not because a deny rule fires, but
because no allow rule matches.
External destinations require no grant and are always permitted.

\subsection{Decision function}
\label{sec:decision}

Let $\mathit{dst} = \langle d, t, q \rangle$ denote a connection
attempt to destination IP~$d$, transport protocol~$t$, and
port~$q$, issued by a process with identifier $\mathit{pid}$.
ProcRoute evaluates the following decision function at each
\texttt{connect()} or UDP \texttt{sendmsg()} invocation:

\[
  \mathit{allow}(\mathit{pid},\, \mathit{dst}) =
  \begin{cases}
    \mathit{true}
      & \text{if } d \notin \bigcup_{i \in \mathcal{I}} i
        \;\text{(external)}\\[4pt]
    \mathit{true}
      & \text{if } d \in \bigcup_{i \in \mathcal{I}} i
        \;\land\; \mathit{prin}(\mathit{pid}) \neq \bot \\
      & \quad \land\; \exists\, r \in R_{\mathit{prin}(\mathit{pid})}
              :\; \mathit{dst} \sqsubseteq r \\[4pt]
    \mathit{false}
      & \text{otherwise}
  \end{cases}
\]

The function is \emph{total}: every $(\mathit{pid},\, \mathit{dst})$
pair yields exactly one verdict.
Totality follows from the exhaustive case split on membership
in~$\mathcal{I}$, the ``otherwise'' clause, and the totality of
$\mathit{prin}$ (Section~\ref{sec:model}), and is a prerequisite
for complete mediation (Section~\ref{sec:invariants}).

When $\mathit{allow}$ returns $\mathit{false}$, the kernel returns
\texttt{EPERM} to the calling process and emits a deny event to the
audit log.  When it returns $\mathit{true}$, the socket operation
proceeds unmodified.

The function is evaluated in the kernel by the eBPF programs
attached to the cgroup socket-address hooks, using a sequence of
BPF map lookups (Section~\ref{sec:enforcement-pipeline}).

\subsection{Split architecture: client tagging and gateway enforcement}
\label{sec:model:split}

The local decision function $\mathit{allow}$
(Section~\ref{sec:decision}) mediates every socket operation on a
single host.
When ProcRoute is deployed across a WireGuard tunnel, the same
cgroup hooks that implement $\mathit{allow}$ are reused on the
client to resolve $\mathit{prin}(\mathit{pid})$, but the verdict
changes from deny to \emph{tag}: the client stamps the socket
with the originating principal, and a cooperating \emph{gateway}
enforces a per-peer policy based on the tag and the packet's
tunnel source address.
This subsection extends the formal model to cover both sides.

\paragraph{Per-peer policy.}
A gateway serves $K$ WireGuard peers, each identified by its
tunnel IP address $s_k$.
The gateway maintains a policy
$\Pi_{\mathit{gw}} =
  \langle \mathcal{I},\;
         \{s_1,\ldots,s_K\},\;
         \mathcal{R}_{\mathit{gw}} \rangle$,
where $\mathcal{R}_{\mathit{gw}}(s_k, a)$ is the grant set
for app\_index~$a$ on peer~$s_k$.
Different peers may have different entitlements: a managed
corporate laptop may receive grants for sensitive prefixes that
a contractor device does not.
The controller distributes $\mathcal{R}_{\mathit{gw}}$ to the
gateway, which populates its \texttt{app\_allow} LPM trie with
composite keys that include the peer's tunnel IP
(Section~\ref{sec:design:split}).

\paragraph{Client-side tagging.}
The client hooks (\texttt{cgroup/connect4/6}) resolve
$\mathit{prin}(\mathit{pid})$ as before but \emph{never
deny}: instead of returning \textsc{deny} for unbound
processes, the hook stamps the socket with
\texttt{SO\_MARK} encoding the principal identity.
Let $\mathit{tag}: \mathit{PID} \to \mathbb{N} \cup \{0\}$ be
the tagging function:
\[
  \mathit{tag}(\mathit{pid}) =
  \begin{cases}
    \mathit{app\_index}(p) & \text{if } \mathit{prin}(\mathit{pid}) = p \neq \bot \\
    0 & \text{if } \mathit{prin}(\mathit{pid}) = \bot
  \end{cases}
\]
A TC egress BPF program on \texttt{wg0} encodes the tag into
packet headers: for IPv4, the IP identification field
(\texttt{ip->id}) carries $\mathit{app\_index}$ and the TOS byte
carries the authorization epoch~$\varepsilon$; for IPv6, the flow
label carries $\mathit{app\_index}$ and the traffic class carries
$\varepsilon$.
The client hooks are purely annotative; they impose no
connectivity restrictions.

\paragraph{Gateway-side enforcement.}
The gateway executes a TC ingress BPF program on its \texttt{wg0}
interface that evaluates a total function
$\mathit{allow\_gw}$ on every inbound packet before routing.
Let $s$ denote the inner source IP (identifying the WireGuard
peer) and $a = \mathit{tag}(\mathit{pkt})$ the app\_index
extracted from the packet header:
\[
  \mathit{allow\_gw}(\mathit{pkt},\, \mathit{dst}) =
  \begin{cases}
    \mathit{false}
      & \text{if } a = 0
        \;\text{(untagged)} \\[4pt]
    \mathit{true}
      & \text{if } d \notin \bigcup_{i \in \mathcal{I}} i
        \;\text{(external)} \\[4pt]
    \mathit{true}
      & \text{if } d \in \bigcup_{i \in \mathcal{I}} i \\
      & \quad \land\; \exists\, r \in
              \mathcal{R}_{\mathit{gw}}(s,\, a)
              :\; \mathit{dst} \sqsubseteq r \\[4pt]
    \mathit{false}
      & \text{otherwise}
  \end{cases}
\]
The function is total: every packet receives a determinate
verdict.  Denied packets are dropped via \texttt{TC\_ACT\_SHOT}
before reaching the routing layer.
Unlike local enforcement, which returns \texttt{EPERM} at
\texttt{connect()} time, gateway-only denies are silent drops;
clients observe standard transport timeouts (TCP SYN
retransmission, UDP undeliverability) unless local enforcement
is also enabled (Section~\ref{sec:design:split}).
Because the grant lookup is keyed on
$(s,\, a)$, two peers may carry the same app\_index~$a$ yet
receive different grants.

\paragraph{Tag integrity and trust model.}
The split architecture relies on three properties:
\begin{enumerate}
  \item \emph{Tag authenticity.}
    WireGuard provides authenticated encryption of the entire inner IP packet.
    An off-path adversary cannot
    inject or modify tags in transit.
  \item \emph{Tag binding.}
    Setting \texttt{SO\_MARK} on a socket requires
    \texttt{CAP\_NET\_\allowbreak ADMIN}, which the threat model
    excludes for the adversary
    (Section~\ref{sec:threat}).  The BPF hooks set the mark in
    kernel context on behalf of the process; an unprivileged
    process cannot override it.
  \item \emph{Gateway mediation.}
    The TC ingress program on \texttt{wg0} executes on every
    packet delivered by the WireGuard interface, before the kernel
    routes the packet.  No bypass path exists for traffic arriving
    on \texttt{wg0}.
\end{enumerate}

\paragraph{End-to-end security property.}
The composition of client tagging and gateway enforcement preserves
the reachable-set guarantee of the local-only model:
an unauthorized process ($\mathit{prin}(\mathit{pid}) = \bot$)
cannot reach any internal destination via the tunnel, because
(i)~$\mathit{tag}(\mathit{pid}) = 0$ (untagged), and
(ii)~the gateway drops all untagged packets.
For a bound process with principal~$p$ on peer~$s_k$, the gateway
allows only destinations covered by
$\mathcal{R}_{\mathit{gw}}(s_k,\, \mathit{app\_index}(p))$.
When the gateway grant set is a subset of the client's local
grant set for every principal (the expected configuration, since
the gateway may impose additional restrictions), the reachable set
under the split architecture is bounded by the local model:
\[
  \mathit{Reach}_{\mathit{split}}(s_k, p)
  \;\subseteq\;
  \mathit{Reach}_{\mathit{local}}(p)
  \;=\;
  \bigcup_{r \in R_p} \{ \mathit{dst} \mid \mathit{dst} \sqsubseteq r \}
\]

\subsection{Policy composition and monotonicity}
\label{sec:composition}

ProcRoute attaches its BPF enforcement program using the
\texttt{BPF\_F\_\allowbreak ALLOW\_MULTI} flag, which composes
programs across the cgroup hierarchy \emph{conjunctively}: the
kernel evaluates every attached program from ancestor to descendant,
and the connection is allowed only if \emph{all} programs return
allow.

Let $\mathit{allow}_A$ denote the decision function of a BPF
program attached at ancestor cgroup~$A$, and let
$\mathit{allow}_C$ denote the decision function of a supplementary
program attached at a descendant cgroup $C \subseteq A$.
The effective verdict for a process in~$C$ is:
\[
  \mathit{allow}_{\mathit{eff}}(\mathit{pid},\, \mathit{dst})
  = \mathit{allow}_A(\mathit{pid},\, \mathit{dst})
    \;\land\;
    \mathit{allow}_C(\mathit{pid},\, \mathit{dst})
\]

\paragraph{Assumption (filter-only programs).}
cgroup socket-address programs can rewrite the
destination address/port in the \texttt{bpf\_sock\_addr} context.
ProcRoute's enforcement program is a pure filter and does not modify
the destination tuple.
The monotonicity claim below is stated for the common deployment in
which co-attached descendant programs are also filter-only;
under this assumption, reachability
is determined solely by the conjunction of allow/deny verdicts.

\begin{theorem}[Monotonicity of \texttt{BPF\_F\_ALLOW\_MULTI}]
\label{thm:monotonicity}
Under conjunctive composition, the set of destinations reachable
by a process in descendant cgroup~$C \subseteq A$ is a subset of
those reachable under the ancestor's policy alone:
\begin{multline*}
  \{ \mathit{dst} \mid
     \mathit{allow}_{\mathit{eff}}(\mathit{pid},\, \mathit{dst})
     = \mathit{true} \}\\
  \;\subseteq\;
  \{ \mathit{dst} \mid
     \mathit{allow}_A(\mathit{pid},\, \mathit{dst})
     = \mathit{true} \}
\end{multline*}
A program attached at a child cgroup can further restrict but
never expand the ancestor's allow set.
\end{theorem}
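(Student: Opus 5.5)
The proof is a direct set-inclusion argument that rests on the conjunctive definition of $\mathit{allow}_{\mathit{eff}}$ and on the filter-only assumption. Fix a process $\mathit{pid}$ whose cgroup is $C \subseteq A$, and take any destination tuple $\mathit{dst} = \langle d, t, q\rangle$ in the left-hand set, so that $\mathit{allow}_{\mathit{eff}}(\mathit{pid},\mathit{dst}) = \mathit{true}$. Unfolding the definition gives $\mathit{allow}_A(\mathit{pid},\mathit{dst}) \land \mathit{allow}_C(\mathit{pid},\mathit{dst}) = \mathit{true}$. By $\land$-elimination, $\mathit{allow}_A(\mathit{pid},\mathit{dst}) = \mathit{true}$, so $\mathit{dst}$ lies in the right-hand set. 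Since $\mathit{dst}$ was arbitrary, the inclusion follows.

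Before that one-line core can be used, two points need justifying.

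\textbf{Well-definedness.} Both $\mathit{allow}_A$ and $\mathit{allow}_C$ must be total functions on $(\mathit{pid},\mathit{dst})$ pairs, so that the conjunction is itself a total Boolean function. For $\mathit{allow}_A$ this follows from the totality argument of Section~\ref{sec:decision}, which rests on the totality of $\mathit{prin}$. For $\mathit{allow}_C$ I simply take totality as part of the hypothesis: every attached BPF program returns either allow or deny.

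\textbf{Modeling \texttt{BPF\_F\_ALLOW\_MULTI}.} I must argue that the kernel's effective verdict really is this conjunction evaluated on the \emph{same} tuple $\mathit{dst}$. This is the main obstacle. The kernel runs the programs from ancestor to descendant, each on a shared \texttt{bpf\_sock\_addr} context, and a descendant program could rewrite the destination. In that case $\mathit{allow}_C$ would see a different tuple, and the reachable destination could differ from the one $\mathit{allow}_A$ approved.

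To handle this, I would invoke the filter-only assumption stated just before the theorem. Since no co-attached program modifies the destination tuple, every program in the chain evaluates on the original $\mathit{dst}$. The socket then proceeds to exactly that $\mathit{dst}$ iff every verdict is allow. By induction over the chain of attached programs from $A$ down to $C$, the effective verdict is $\bigwedge_j \mathit{allow}_j(\mathit{pid},\mathit{dst})$. The same $\land$-elimination then yields the ancestor conjunct, so the argument covers several intermediate attachments as well as the two-level case.

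Finally, I would restate the conclusion in policy terms. Nothing a child program returns can turn a false $\mathit{allow}_A$ into true, because $\mathit{false}\land x = \mathit{false}$. A descendant can therefore only shrink the allow set, never expand it. I would also note that the inclusion can be strict whenever $\mathit{allow}_C$ denies some $\mathit{dst}$ that $\mathit{allow}_A$ permits.
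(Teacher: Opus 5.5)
Your proposal is correct and follows the paper's proof sketch: you unfold $\mathit{allow}_{\mathit{eff}} = \mathit{allow}_A \land \mathit{allow}_C$, apply $\land$-elimination to land in $A$'s allow set, and extend to a finite chain of attachments by induction on the conjuncts. The only difference is that you make explicit the totality of each conjunct and the filter-only assumption (so every program judges the same tuple), whereas the paper states that assumption in the text just before the theorem rather than inside the proof.
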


\begin{proof}[Proof sketch]
By definition, $\mathit{allow}_{\mathit{eff}} =
\mathit{allow}_A \land \mathit{allow}_C$.
For any $(\mathit{pid}, \mathit{dst})$ pair where
$\mathit{allow}_{\mathit{eff}} = \mathit{true}$, both
$\mathit{allow}_A = \mathit{true}$ and
$\mathit{allow}_C = \mathit{true}$ must hold (conjunction).
Therefore every element of the effective allow set is in
$A$'s allow set, establishing the subset relation.
The argument extends by induction to any finite chain of
cgroups $C_1 \subseteq C_2 \subseteq \cdots \subseteq A$:
each additional conjunct can only shrink the allow set.
\end{proof}

This property is critical for ProcRoute's security posture: the
default-deny baseline established by the ancestor program cannot
be weakened by any attachment in a descendant cgroup, even if
the descendant's program is controlled by a different
administrative entity (e.g., an application-specific audit hook
added by a team-level policy).
The implementation-level implications of this composition model,
including the contrast with \texttt{BPF\_F\_ALLOW\_OVERRIDE}
(which would break monotonicity), are discussed in
Section~\ref{sec:attach-flags}.
Note that if a descendant program rewrites the destination tuple,
it can change what the kernel ultimately connects to; such rewriting
is orthogonal to (and can undermine) verdict-only monotonicity.
ProcRoute is designed and deployed as a non-mutating filter.

\begin{theorem}[Non-widening policy updates]
\label{thm:update-safety}
Let $\Pi = \langle \mathcal{I}, \mathcal{P}, \mathcal{R} \rangle$
denote the current policy and
$\Pi' = \langle \mathcal{I}', \mathcal{P}', \mathcal{R}' \rangle$
an updated policy installed by the controller via a sequence of
\texttt{bpf\_map\_\allowbreak update\_elem} and
\texttt{bpf\_map\_\allowbreak delete\_elem} calls.
If the controller applies updates using a
\emph{fail-closed} ordering: (i)~for maps whose absence
implies \textsc{deny} on the internal path (e.g.,
\texttt{cgroup\_to\_app}, \texttt{app\_allow},
\texttt{app\_exec\_hash}, \texttt{task\_verified}), it
\emph{deletes} removed entries before \emph{inserting} new
entries; and (ii)~for the \texttt{internal\_\allowbreak
prefixes} classifier (where a miss implies external
passthrough \textsc{allow}), it \emph{inserts} new prefixes
before \emph{deleting} removed prefixes, then at every
intermediate map state $M_k$ observed by a concurrent BPF
hook invocation, the effective allow set is a subset of
$\mathit{Allow}(\Pi) \cup \mathit{Allow}(\Pi')$:
\[
  \mathit{Allow}(M_k)
  \;\subseteq\;
  \mathit{Allow}(\Pi) \cup \mathit{Allow}(\Pi')
\]
In particular, no destination that is denied by \emph{both} the
old and new policies is transiently permitted during the update.
\end{theorem}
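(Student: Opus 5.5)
The plan is to model each intermediate state $M_k$ as a tuple of map contents and to show that every allow verdict at $M_k$ is witnessed by either the old or the new policy. Write the internal-path verdict as a conjunction of \emph{grant-side} conditions: the cgroup binding (\texttt{cgroup\_to\_app}), the grant (\texttt{app\_allow}), and the optional hash checks (\texttt{app\_exec\_hash}, \texttt{task\_verified}). The external verdict is governed by the \texttt{internal\_prefixes} classifier. We then track each map separately along the update sequence.

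The key invariant for a deny-on-absence map $X$ is this. Under ordering~(i), if the schedule performs all deletions of $X_\Pi \setminus X_{\Pi'}$ before any insertions of $X_{\Pi'} \setminus X_\Pi$, then every intermediate content satisfies $X_k \subseteq X_\Pi$ (deletion phase) or $X_k \subseteq X_{\Pi'}$ (insertion phase). In either case $X_k \subseteq X_\Pi \cup X_{\Pi'}$. Modified entries are treated as a delete of the old value followed by an insert of the new one. For the classifier under ordering~(ii), the inserts come first, so $\mathcal{I}_\Pi \cap \mathcal{I}_{\Pi'} \subseteq \mathcal{I}_k$, and hence $\bigcup\mathcal{I}_k \supseteq \bigcup\mathcal{I}_\Pi \cap \bigcup\mathcal{I}_{\Pi'}$. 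So any destination internal under both policies is classified internal at $M_k$.

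Now take $\mathit{dst}$ with $\mathit{allow}_{M_k}(\mathit{pid},\mathit{dst})=\mathit{true}$ and split into cases.
\begin{enumerate}
  \item If $d$ is external under $\Pi$ or under $\Pi'$, then $\mathit{dst}$ lies in $\mathit{Allow}(\Pi)\cup\mathit{Allow}(\Pi')$ by the first clause of the decision function (Section~\ref{sec:decision}).
  \item Otherwise $d$ is internal under both policies, hence internal at $M_k$ by the classifier invariant. The verdict was then produced by the grant path: a binding entry, a matching \texttt{app\_allow} entry, and hash entries all present in $M_k$. Each of these lies in the old or the new map by the grant-side invariant.
\end{enumerate}

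The main obstacle is in the second case: the conjunction mixes maps. The binding could come from $\Pi$ while the grant comes only from $\Pi'$, giving a ``cross'' allow that neither policy grants. I would handle this by requiring the controller to run the fail-closed phases globally across all deny-on-absence maps: all deletions in every such map precede all insertions in any of them. Then at any $M_k$ either every grant-side map is a subset of its $\Pi$ version (deletion phase, so the allow is witnessed by $\Pi$), or every one is a subset of its $\Pi'$ version (insertion phase, so the allow is witnessed by $\Pi'$). Surviving entries are common to both policies and are harmless. I would state this global-phase reading explicitly as the interpretation of ordering~(i), since with per-map ordering alone the claim can fail.

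A secondary point is atomicity: each BPF hook reads each map entry with single-element atomicity (RCU). The hook therefore observes a sequence of states consistent with some prefix of the update sequence. I would treat that as a kernel assumption rather than prove it. Under destination-disjoint well-formedness, the LPM lookup returns the unique covering grant in whichever map state is seen.
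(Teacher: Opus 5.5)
Modulo the repairs in the next paragraph, your proof is correct and has the same skeleton as the paper's sketch. Delete-first keeps every deny-biased map inside its old or its new contents. Insert-first keeps the classifier from dropping addresses that both policies treat as internal. The two facts are combined through the case split of the decision function.

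Where you differ is exactly where the sketch is thin. The paper reasons map by map and asserts that delete-before-insert \emph{can only shrink the allow set before expanding it}. That is true only if the deletion phase is global across all deny-biased maps, and your cross-allow observation shows per-map ordering genuinely fails. For example, let $\Pi$ bind cgroup $c$ to $p$ without granting $X$, and let $\Pi'$ unbind $c$ and grant $X$ to $p$. Updating \texttt{app\_allow} before \texttt{cgroup\_to\_app} respects delete-first in each map, yet lets $c$ reach $X$. So your global-phase reading of~(i), with value changes split into delete-plus-insert, is the hypothesis the theorem actually needs.

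Your case split on whether $d$ is internal under \emph{both} policies is likewise the right combination step. It only needs such addresses to stay classified internal; the paper's claim that the classified set contains both $\mathcal{I}$ and $\mathcal{I}'$ holds in general only at the switch-over point. Your appeal to destination-disjointness is also what makes the \texttt{app\_allow} check monotone at all. A subset of a disjoint trie is disjoint, whereas in a general LPM trie deleting a longer prefix can expose a shorter one with a different port range. The sketch's \emph{missing entry returns deny} passes over this.

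Three repairs remain.

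(1)~The classifier step is a non sequitur. $\mathcal{I}_\Pi \cap \mathcal{I}_{\Pi'} \subseteq \mathcal{I}_k$ compares prefix entries, so it does not yield the address-level inclusion when the policies cover the same addresses with different prefixes. For instance, old \texttt{10.0.0.0/8} and new \texttt{10.0.0.0/9}, \texttt{10.128.0.0/9} share no entry. Argue by phases instead: $\mathcal{I}_k \supseteq \mathcal{I}_\Pi$ while inserting, and $\mathcal{I}_k \supseteq \mathcal{I}_{\Pi'}$ while deleting.

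(2)~\texttt{app\_exec\_hash} is not deny-on-absence, despite the statement's list. By Section~\ref{sec:lifecycle}, a missing key means verification is \emph{not} required, so the gate is anti-monotone in that map. Your deletion-phase inclusion therefore no longer yields a witness in $\Pi$. Suppose $\Pi'$ drops $p$'s hash requirement and also unbinds $c$. Deleting the hash entry before the binding then lets an unverified process in $c$ reach any grant $p$ holds in both policies, which neither policy allows. Also, \texttt{task\_verified} is daemon-maintained runtime state, not part of $\Pi$ at all. The clean fix is to take $\mathit{Allow}(\Pi)$ from the hash-free decision function of Section~\ref{sec:decision}. The gate is then an extra conjunct that can only remove allows, and both hash maps drop out of the argument.

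(3)~Per-entry RCU atomicity does not give a single observed state. A hook whose \texttt{cgroup\_to\_app} lookup precedes the phase boundary and whose \texttt{app\_allow} lookup follows it reproduces your cross allow. The theorem's wording (a state $M_k$ observed by a hook) builds the snapshot in, so you may assume it. State it as an assumption your global-phase argument depends on, not as a consequence of RCU.
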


\begin{proof}[Proof sketch]
Each map operation is atomic.
For \emph{deny-biased} maps (\texttt{cgroup\_to\_app},
\texttt{app\_allow}, etc.), a missing entry returns \textsc{deny}; deleting before inserting
can only shrink the allow set before expanding it, so no transient
allow absent from both $\Pi$ and $\Pi'$ can appear.
For the \emph{allow-biased} \texttt{internal\_prefixes} map (miss
$\Rightarrow$ external passthrough), inserting before deleting keeps
the classified-internal set as a superset of both $\mathcal{I}$ and
$\mathcal{I}'$, preventing transient external reclassification.
Combining both orderings with the total decision function yields
$\mathit{Allow}(M_k) \subseteq \mathit{Allow}(\Pi) \cup
\mathit{Allow}(\Pi')$.
\end{proof}

\subsection{Enforcement pipeline}
\label{sec:enforcement-pipeline}

The decision function is implemented as a six-stage pipeline
executed inside the BPF hook on each socket address operation:

\begin{enumerate}
  \item \textbf{Destination extraction.}
    Read the destination IP, port, and protocol from the
    \texttt{bpf\_sock\_addr} context.
  \item \textbf{Internal prefix lookup.}
    Query the \texttt{internal\_prefixes} LPM trie with the
    destination IP.  If no match, return \textsc{allow} (external
    destination; split-tunnel passthrough).
  \item \textbf{Principal resolution.}
    Read the cgroup~ID of the calling process and look
    up \texttt{cgroup\_to\_app}.  If no binding exists
    ($\mathit{prin} = \bot$), return \textsc{deny}.
  \item \textbf{Binary hash verification gate} (optional).
    If the principal's policy includes an \texttt{exec\_hash}
    requirement, look up the calling process's TGID in the
    \texttt{task\_verified} map.  If no entry exists, or the
    verified principal does not match, return \textsc{deny}.
    This step is skipped for principals without a declared hash,
    imposing zero overhead on the common case.
  \item \textbf{Per-app prefix lookup.}
    Construct a composite key $\langle p,\, d \rangle$ and query the
    \texttt{app\_allow} LPM trie.  If no matching rule exists, return
    \textsc{deny}.
  \item \textbf{Port and protocol check.}
    Verify that the destination port and protocol satisfy the matched
    rule's constraints.  If not, return \textsc{deny}; otherwise
    return \textsc{allow}.
\end{enumerate}

Every \textsc{deny} emits a structured event (PID, command
name, cgroup~ID, principal index, destination IP, port, protocol,
timestamp) to a ring buffer for user-space consumption.

\subsection{Security invariants}
\label{sec:invariants}

ProcRoute maintains six invariants, grouped into local enforcement and
gateway enforcement:

\paragraph{Local enforcement.}
(1)~\textbf{Default-deny}: an internal connection requires both a bound principal
and an explicit matching grant;
(2)~\textbf{Unimpeded external access}: destinations outside
$\mathcal{I}$ are never blocked;
(3)~\textbf{Complete mediation}: every \texttt{connect()}/\texttt{sendmsg()} in the ProcRoute subtree is evaluated before
the kernel processes the address~\cite{kernel_bpf_cgroup,kernel_cgroup2};
(4)~\textbf{Audit completeness}: every deny emits a structured event.

\paragraph{Gateway enforcement.}
(5)~\textbf{Tag integrity}: unprivileged processes cannot forge
\texttt{SO\_MARK} values, because setting the mark requires
\texttt{CAP\_NET\_ADMIN} (excluded by the threat model);
the BPF hooks set the mark in kernel context and an
unprivileged adversary cannot override it.
(6)~\textbf{Gateway mediation}: the TC ingress BPF program on
\texttt{wg0} evaluates every packet arriving on the tunnel
interface before the kernel routes the packet; untagged or
policy-violating packets are dropped via \texttt{TC\_ACT\_SHOT}.
Failure modes are analyzed in Sections~\ref{sec:residual}
and~\ref{sec:discussion}.

\subsection{Process identity and executable hash verification}
\label{sec:proc-identity}

Process identity in ProcRoute is determined by cgroup membership:
the controller creates a cgroup subtree per principal and moves
application processes into it at launch time.  The BPF program
calls \texttt{bpf\_get\_current\_cgroup\_id()} to obtain a stable,
kernel-assigned identifier and looks it up in
\texttt{cgroup\_to\_app}.

Because child processes inherit cgroup membership, a child that
\texttt{exec()}s a different binary retains the parent's grants
by default.  To tighten this, the prototype implements optional
per-binary SHA-256 hash verification.
When a principal's policy includes an \texttt{exec\_hash} field,
a BPF program on the \texttt{sched\_process\_exec} tracepoint
invalidates the prior verification and emits an exec event.
The user-space daemon reads the event, computes the binary's
SHA-256 digest via \texttt{/proc/<tgid>/exe}, and on match sets
\texttt{task\_verified[tgid] = app\_index}; on mismatch, the
process remains denied.
A companion tracepoint on \texttt{sched\_process\_exit} deletes
the TGID entry, preventing stale verifications.
Connections attempted before verification completes are denied.
Principals without a declared hash skip the gate entirely,
imposing zero overhead on the common case.

\section{Design and Implementation}
\label{sec:design}

ProcRoute comprises three components: (1)~a \emph{policy authority}
on the enterprise control plane that defines internal route sets and
per-application grants; (2)~an \emph{endpoint controller} (user-space
daemon) that manages cgroups, loads BPF programs, populates policy
maps, and streams audit logs; and (3)~\emph{kernel enforcement}
via eBPF programs attached to cgroup socket-address hooks.
Figure~\ref{fig:deployment} shows the overall deployment
architecture.

\subsection{Enforcement hooks and policy maps}
\label{sec:lifecycle}

ProcRoute attaches eBPF programs to four cgroup~v2
socket-address hooks: \texttt{connect4/6} and
\texttt{sendmsg4/6}.
These hooks execute synchronously inside the
\texttt{connect()} (or UDP \texttt{sendmsg()}) system-call
path, before the kernel constructs or transmits any packet.
On deny, the kernel returns \texttt{EPERM}; no SYN
segment or UDP datagram is emitted.

The BPF program interacts with six map structures to implement the
enforcement pipeline:

\begin{itemize}
  \item \texttt{internal\_prefixes}: an LPM trie encoding the
    internal prefix set~$\mathcal{I}$.  A miss means the destination
    is external and the connection is allowed without further checks.
  \item \texttt{cgroup\_to\_app}: a hash map that resolves the
    calling process's cgroup~ID to an application principal.
    A miss means no principal binding exists; the connection is
    denied.
  \item \texttt{app\_exec\_hash}: a hash map keyed by principal
    index whose values are 32-byte SHA-256 digests.  Presence of
    an entry means the principal requires binary hash
    verification; no key is present if verification is not
    required.
  \item \texttt{task\_verified}: a hash map keyed by TGID whose value is the verified
    principal index.  Set by the user-space daemon after
    confirming the binary's SHA-256 hash matches the policy.
    The BPF hook consults this map only for principals with an
    entry in \texttt{app\_exec\_hash}.
  \item \texttt{app\_allow}: a per-principal LPM
    trie keyed by (principal, destination) encoding
    the normalized grant set.  Under the
    well-formedness condition of
    Section~\ref{sec:lpm-overlaps}, at most one rule
    matches a given destination~IP, so a single LPM
    lookup plus a port/protocol check implements the
    grant predicate; otherwise deny.
\end{itemize}

\noindent
Every deny emits a structured audit event to a ring buffer.  A second ring buffer
(\texttt{exec\_events}) carries
\texttt{sched\_\allowbreak process\_exec} notifications to
user space for hash verification
(Section~\ref{sec:proc-identity}).

\subsection{Attach strategy and composition}
\label{sec:attach-flags}

ProcRoute attaches its enforcement program to the session-level
cgroup using \texttt{BPF\_F\_ALLOW\_MULTI}.  Under this flag,
programs \emph{accumulate} across the cgroup hierarchy: the kernel
runs every attached program from ancestor to descendant, and the
connection is allowed only if all return allow.  A descendant cgroup
therefore cannot weaken an ancestor's deny; it can only impose
additional restrictions.  This provides two properties:
(1)~\emph{fail-closed}: a process with no principal binding is
denied at the parent level before any child program executes; and
(2)~\emph{composability}: supplementary programs (e.g., per-app
audit hooks) can be layered in child cgroups without weakening the
baseline deny.

Policy differentiation across applications is achieved not by
attaching different BPF programs per child cgroup, but by
populating \texttt{cgroup\_to\_app}: each application cgroup's ID
maps to a principal whose grant set determines allowed destinations.
Processes in cgroups with no map entry receive no grants.

\subsection{Controller responsibilities}

The user-space controller daemon performs five functions:
(1)~provisions a cgroup subtree for each application principal
defined in the policy;
(2)~places application processes into their designated cgroups at
launch time (or moves them on policy update);
(3)~loads BPF programs, attaches them, and populates the
policy maps including per-principal expected hashes;
(4)~runs a hash verifier that reads exec events from the
\texttt{exec\_events} ring buffer, computes the SHA-256 digest
of the binary,
and on match sets the \texttt{task\_verified} map entry to
ungate the process's connections; and
(5)~drains the deny-events ring buffer and streams structured
audit records to a local sink.
Map updates take effect atomically from the BPF program's perspective, since each
map operation is a single \texttt{bpf\_map\_update\_elem} call.
In addition, the controller (i) normalizes each principal's grant set
to satisfy the destination-disjointness condition of
Section~\ref{sec:lpm-overlaps} before populating \texttt{app\_allow},
and (ii) applies the fail-closed update ordering of
Theorem~\ref{thm:update-safety} (insert-before-delete for
\texttt{internal\_prefixes}, delete-before-insert for deny-biased
maps).

\begin{figure*}[t]
  \centering
  \includegraphics[width=0.7\textwidth]{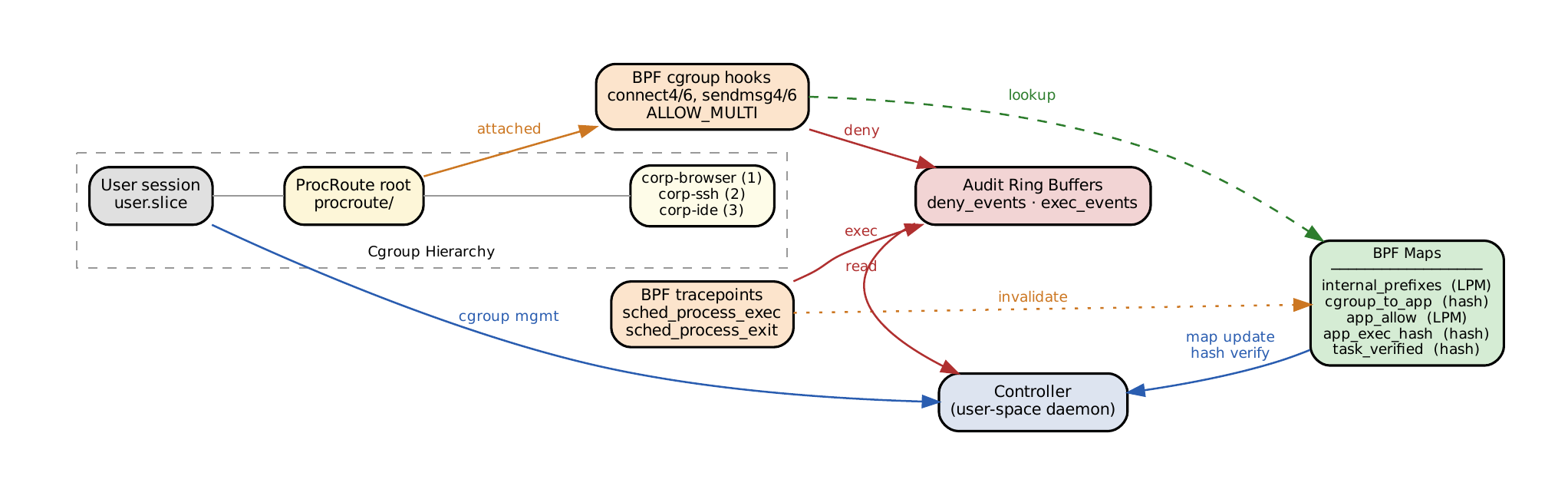}
  \caption{ProcRoute enforcement architecture: controller, BPF hooks, maps, and audit path.}
  \Description{Block diagram showing the ProcRoute deployment architecture with an endpoint controller managing cgroup placement, BPF program loading, and policy map population, eBPF hooks on cgroup socket-address operations enforcing per-process route authorization, and audit log streaming.}
  \label{fig:deployment}
\end{figure*}

\subsection{Split architecture: client tagging and gateway enforcement}
\label{sec:design:split}

When ProcRoute protects traffic traversing a WireGuard tunnel,
the enforcement boundary splits between the client endpoint and
the gateway.  The client reuses the same cgroup
\texttt{connect4/6} hooks described in
Section~\ref{sec:lifecycle}, but instead of returning
\textsc{deny} for unbound processes, the hooks annotate
outbound packets with the originating principal; the gateway
then enforces policy on the annotated packets before routing
them into the internal network.
This subsection describes the four components of the split
architecture.

\subsubsection{Client-side tagging}

The cgroup \texttt{connect4}/\texttt{connect6} hooks on the client
resolve $\mathit{prin}(\mathit{pid})$ using the same
\texttt{cgroup\_to\_app} map as the local-only mode but instead of
denying unbound processes, the hooks stamp the socket with
\texttt{SO\_MARK} encoding the principal's \texttt{app\_index}.
A TC egress BPF program on the \texttt{wg0} interface copies the
mark into packet header fields: for IPv4, \texttt{ip->id} carries
\texttt{app\_index} and \texttt{ip->tos} carries the authorization
epoch~$\varepsilon$; for IPv6, the flow label carries
\texttt{app\_index} and the traffic class carries~$\varepsilon$.
Unbound processes ($\mathit{prin} = \bot$) receive
$\texttt{SO\_MARK} = 0$; their packets carry tag zero and are
dropped at the gateway.
The client hooks never deny: connectivity restrictions are deferred
entirely to the gateway, avoiding duplicate enforcement and
simplifying client-side auditing.

\subsubsection{Gateway-side enforcement}

The gateway runs a TC ingress BPF program on its \texttt{wg0}
interface.  The program executes the following steps on every
inbound packet, before the kernel routes the packet:
\begin{enumerate}
  \item \textbf{Tag extraction.}
    Read \texttt{app\_index} and epoch from the IPv4 identification
    / TOS fields (or IPv6 flow label / traffic class).
    Record the inner source IP~$s$ (the peer's tunnel address).
  \item \textbf{Untagged drop.}
    If \texttt{app\_index} $= 0$, return \texttt{TC\_ACT\_SHOT}.
  \item \textbf{Per-peer, per-principal LPM lookup.}
    Construct the composite key
    $\langle s,\, \texttt{app\_index},\, \mathit{dst\_ip} \rangle$
    and query \texttt{app\_allow}.  If no match, drop.
    The three-part key ensures that two peers carrying the same
    \texttt{app\_index} are evaluated against their respective
    grant sets.
  \item \textbf{Port and protocol check.}
    Verify the destination port and protocol against the matched
    rule's constraints.  On mismatch, drop.
  \item \textbf{Tag clearing.}
    On allow, clear the tag fields (reset \texttt{ip->id},
    \texttt{tos}) so that internal hosts see clean headers.
    Emit a deny event to the ring buffer on drop.
\end{enumerate}
Exempt ports (e.g., WireGuard keepalive, DNS to the gateway's
own resolver) are whitelisted before step~2 to avoid breaking
control-plane traffic.

\subsubsection{Flow cache}

To amortize the cost of per-packet LPM lookups on long-lived
flows, the gateway maintains an LRU hash map keyed on the 5-tuple
(source IP, destination IP, source port, destination port,
protocol).
The first packet of each flow executes the full lookup pipeline
and, on allow, inserts an entry with a 5-second TTL.
Subsequent packets matching the cached 5-tuple skip directly to
the allow verdict.
The cache is invalidated atomically on epoch bump: the program
compares the cached epoch against the current global
epoch~$\varepsilon$ and treats stale entries as misses,
forcing a fresh lookup.
Section~\ref{sec:eval:wireguard} reports cache hit rates of 75\%
for short flows and $>$99.99\% for long flows.

\subsubsection{Gateway revocation}

The authorization epoch~$\varepsilon$ embedded in each packet
header provides a revocation signal.  When the controller bumps
the epoch (via \texttt{bpf\_map\_update\_elem} on the gateway's
epoch array), the TC ingress program detects the mismatch between
the packet's epoch and the current global epoch and drops the
stale packet.
Flow-cache entries carry the epoch at insertion time; a stale cache
entry is treated as a miss, forcing a fresh LPM lookup against the
updated policy maps.
This provides sub-millisecond revocation: the median epoch-bump
latency measured in Section~\ref{sec:eval:wireguard} is
190\,\textmu{}s, with new connections blocked within
146\,\textmu{}s.

\paragraph{Tag-carrier limitations.}
The IPv4 identification field is consumed by IP fragmentation;
the TC egress program sets DF and the prototype assumes a
tunnel MTU that avoids fragmentation (WireGuard enforces this
by default).
Overwriting TOS/traffic class clobbers DSCP and ECN bits;
the gateway clears tag fields on allow, restoring zeros, which
is acceptable for tunnel-interior hops but discards any
upstream DSCP markings.
The IPv6 flow label is 20~bits, limiting \texttt{app\_index}
to $2^{20}$ principals, sufficient for endpoint policy but
may conflict with ECMP hashing on intermediate routers; the
WireGuard outer header carries its own flow label, so inner
reuse does not affect outer-path ECMP.
A production deployment in a DSCP-sensitive network could avoid
header-field overloading entirely by using a dedicated
encapsulation shim (e.g., a Geneve TLV or a fixed-offset
preamble inside the WireGuard tunnel) at the cost of additional
per-packet overhead; the prototype uses header-field tagging
because it requires no tunnel-format changes and is sufficient
for the evaluated scenarios.

\section{Security Analysis}
\label{sec:security}

We now analyze the security properties ProcRoute provides under
the threat model: enforcement
completeness, the least-privilege guarantee of default-deny,
the integrity of cgroup-based principal binding, and the
boundaries of these guarantees.

\subsection{Enforcement completeness at socket operations}

ProcRoute interposes on outbound connections via four eBPF
cgroup socket-address hooks (\texttt{cgroup/connect4/6},
\texttt{cgroup/sendmsg4/6}), invoked \emph{before} the kernel
processes the socket address~\cite{kernel_bpf_cgroup}.
Returning \texttt{BPF\_DROP} fails the call with \texttt{EPERM}
before any packet is emitted.
All user-space TCP connections pass through \texttt{connect()},
and all UDP transmissions through \texttt{sendmsg()}-family calls;
cgroup BPF programs are inherited by descendant
cgroups~\cite{kernel_cgroup2}, so the entire ProcRoute subtree
is covered without re-attachment.
Raw sockets (\texttt{SOCK\_RAW}) are not mediated but require
\texttt{CAP\_NET\_\allowbreak RAW}, excluded by the threat model.

\subsection{Default-deny and least privilege}

The decision function yields
default-deny on internal prefixes: an unbound process
($\mathit{prin} = \bot$) has zero internal access regardless of
UID or routing table, and a bound process can reach only its
granted resources $R_p$.
External destinations are never blocked, preserving split-tunnel
Internet access.

\subsection{Cgroup binding integrity}
\label{sec:cgroup-integrity}

ProcRoute's security rests on two properties:
(1)~an unprivileged adversary cannot join an authorized cgroup,
and (2)~the controller binds only intended processes.

\paragraph{Atomic placement (primary path).}
On Linux~$\geq$5.7, the prototype uses \texttt{clone3()} with the
\texttt{CLONE\_INTO\_CGROUP} flag as the \emph{primary, production
path} for process placement.
The controller passes the target cgroup's file descriptor to
\texttt{clone3()}, and the kernel places the child into the
designated cgroup atomically at creation time, before any
user-space code in the child executes.
This eliminates the window between process creation and cgroup
migration entirely; no interposition gap exists in which the child
could issue a \texttt{connect()} under the wrong principal.

Forked children inherit cgroup membership; \texttt{exec\_hash}
verification invalidates
grants on \texttt{exec()}, but \texttt{fork()}-only children
retain grants unconditionally.

\paragraph{Binding integrity statement.}
An unprivileged process cannot migrate itself into a ProcRoute
application cgroup for three reasons.
First, ProcRoute's controller creates all application cgroups
under \texttt{/sys/fs/cgroup/procroute/} with root ownership
(\texttt{uid=0, gid=0}); writing a PID to \texttt{cgroup.procs}
requires write permission on that file, which unprivileged
processes do not possess~\cite{kernel_cgroup2}.
Second, the controller does \emph{not} delegate any ProcRoute
cgroup subtree~\cite{kernel_cgroup2_delegation}: no
\texttt{cgroup.delegate} file is set, and no non-root user
receives write access to any node in the ProcRoute hierarchy.
Third, even if an adversary creates a cgroup elsewhere in the
hierarchy (e.g., under a user-delegated subtree), that cgroup
has no entry in the \texttt{cgroup\_to\_app} BPF map; the BPF
hook resolves $\mathit{prin} = \bot$ and returns \textsc{deny}.
These three properties together ensure that cgroup-based principal
binding cannot be forged by an unprivileged adversary.

\subsection{Tag integrity in split architecture}
\label{sec:tag-integrity}

When ProcRoute operates in split mode, the security argument extends
from cgroup binding integrity to tag integrity across the tunnel.

\paragraph{\texttt{SO\_MARK} protection.}
The client BPF hooks set \texttt{SO\_MARK} within the kernel's BPF
execution context; the mark is subsequently copied into packet
headers by the TC egress program.
Setting \texttt{SO\_MARK} from user space requires
\texttt{CAP\_NET\_ADMIN}, which the threat model excludes.
An unprivileged adversary therefore cannot forge a mark: its
sockets carry $\texttt{SO\_MARK} = 0$, producing untagged packets
that the gateway drops unconditionally.

\paragraph{Gateway enforcement completeness.}
The TC ingress program on \texttt{wg0} executes on every packet
delivered by the WireGuard interface.  Because \texttt{wg0} is the
only ingress path for tunnel traffic, and the TC hook fires before
the kernel routes the packet, there is no bypass path.
Combined with WireGuard's authenticated encryption, which prevents
off-path injection or modification of inner-packet headers, the
gateway sees exactly the tags stamped by the client.
WireGuard also binds each peer's public key to its allowed tunnel
IP, so the inner source address~$s$
used in the per-peer grant lookup is cryptographically
authenticated; one peer cannot spoof another's tunnel IP.

\paragraph{Composition argument.}
The combination of (i)~mandatory client tagging (unbound processes
produce tag~$= 0$), (ii)~unforgeability of \texttt{SO\_MARK} by
unprivileged processes, (iii)~WireGuard authenticated encryption
binding tag to source peer, and (iv)~gateway enforcement of a
per-peer policy $\mathcal{R}_{\mathit{gw}}(s_k, a)$ keyed on
tunnel source IP and app\_index implies that the reachable set of
any process under the split architecture is bounded by the
gateway's grant set for that peer.
When the gateway grant set is a subset of the client's local
grants (the expected configuration), this gives
$\mathit{Reach}_{\mathit{split}}(s_k, p) \subseteq
 \mathit{Reach}_{\mathit{local}}(p)$ for all peers~$s_k$ and
principals~$p$ (including $\bot$), as stated in
Section~\ref{sec:model:split}.

\subsection{Residual attack surface}
\label{sec:residual}

\paragraph{In-process compromise.}
Code execution \emph{within} an authorized process (memory
corruption, malicious extensions, library injection) inherits the
victim's cgroup membership and grants.  This is inherent to any
process-granularity boundary and requires complementary controls
(sandboxing, exploit mitigation).
Complementary mechanisms operating within the process
boundary, such as \texttt{seccomp}-BPF profiles restricting
\texttt{connect()} argument ranges, Landlock network rules
(Linux~$\geq$6.7), and CFI-instrumented binaries preventing
control-flow hijacking, can significantly raise the bar for
injected code reaching a socket call.
For browser principals specifically, the existing renderer
process sandbox already provides an independent enforcement
boundary that does not rely on cgroup membership.

\paragraph{Privileged attacker.}
A \texttt{root}/\texttt{CAP\_SYS\_ADMIN} attacker can detach BPF
programs or modify maps; the threat model excludes this case.

\paragraph{DNS-based bypasses.}
ProcRoute enforces on IP prefixes, not DNS names.  Resolved IPs
outside $\mathcal{I}$ reach no internal resource; those inside
$\mathcal{I}$ are subject to the decision function.  DNS integrity
is an orthogonal concern.

\subsection{Policy shrink and revocation behavior}
\label{sec:revocation}

ProcRoute intervenes at \texttt{connect()} and
\texttt{sendmsg()} time; it does not monitor sockets after
establishment.
Consequently, when a policy update \emph{shrinks} a principal's
grant set (e.g., removing a prefix from~$R_p$), TCP connections
that were authorized under the prior policy and are already in
the \texttt{ESTABLISHED} state persist until they are closed by
the application or time out.
New connections to the revoked prefix are denied immediately by
the updated BPF maps, but the in-flight socket retains its
kernel state and could continue to exchange data.

The prototype implements G1 (connect-time) and two mechanisms
toward G2 (data-path revocation):
(1)~an \emph{authorization epoch} stored in a BPF array map: on
policy update, the controller increments the epoch; each
\texttt{connect} hook stamps the socket cookie with the current
epoch, and each \texttt{sendmsg} hook denies sends on stale
sockets (immediate UDP revocation);
(2)~a \emph{background sweeper} that scans \texttt{/proc/net/tcp}
at 1\,s intervals and issues \texttt{ss~--kill}
(\texttt{SOCK\_DESTROY}) for stale TCP sockets.
A production deployment could add a \texttt{sock\_ops} hook for
per-segment enforcement; We believe the sweeper program is
sufficient for the evaluation.

\section{Evaluation}
\label{sec:eval}
ProcRoute's enforcement spans two layers: cgroup socket-address
hooks that resolve principals and evaluate grants at
\texttt{connect()}\slash\texttt{sendmsg()} time, and, in split
mode, TC programs that tag and enforce packets across a
WireGuard tunnel.
We evaluate these layers separately and in composition.
Single-machine loopback microbenchmarks (Q1--Q6) isolate the
cgroup hook overhead that is common to both local-only and
client-side tagging modes, independent of network and encryption
costs.
A two-machine WireGuard deployment then measures the end-to-end
cost of the full stack (client tagging, tunnel encryption, and
gateway TC enforcement) including throughput, latency,
multi-stream scaling, flow-cache effectiveness, gateway policy
scaling, and revocation.

\subsection{Experimental setup}
\label{sec:eval:setup}

\paragraph{Single-machine microbenchmark (loopback).}
All Q1--Q6 benchmarks ran on a dedicated bare-metal server:
AMD EPYC 7443P (24~cores, 2.85\,GHz base), 64\,GB DDR4,
Ubuntu~24.04.3~LTS, cgroup~v2 unified hierarchy.
Clients and listeners were pinned to separate cores.
Internal-prefix traffic used a loopback alias, causing the BPF
hook to walk the full pipeline, thus isolating the cgroup BPF hook overhead from network and
encryption costs.

\paragraph{WireGuard deployment (two machines).}
Each machine is an AMD Ryzen~7 7700X
(8~cores/16~threads, 4.5\,GHz, 61\,GB RAM) running
Ubuntu~24.04~LTS, connected via
a WireGuard tunnel over public IPv4.
On the \emph{client}, cgroup \texttt{connect4/6} hooks tag
authorized flows with \texttt{SO\_MARK} and IPv6 flow
labels.  On the \emph{gateway}, a TC ingress BPF program on
\texttt{wg0} enforces per-principal policy with an optional
flow cache.
Five configurations are tested:
\emph{wg\_baseline} (no BPF),
\emph{wg\_nftables} (nftables cgroup match),
\emph{wg\_tag\_only} (client hooks only),
\emph{wg\_enforce\_nocache} (full enforcement,
no cache), and \emph{wg\_enforce\_cache} (full
enforcement with flow cache).

\subsection{Single-machine microbenchmark summary}
\label{sec:eval:micro-summary}

To isolate BPF hook overhead from network and encryption
costs, we ran six microbenchmark experiments (Q1--Q6) on a
loopback path using a dedicated bare-metal server (AMD EPYC
7443P, 24~cores,
64\,GB RAM).
Table~\ref{tab:micro-summary} summarizes the key results; full
methodology, per-experiment tables, and analysis are in
Appendix~\ref{sec:appendix:micro}.

\begin{table}[t]
\centering
\caption{Single-machine microbenchmark summary (loopback).
  Full details in Appendix~\ref{sec:appendix:micro}.}
\label{tab:micro-summary}
\begin{tabular}{p{3.2cm} p{4.4cm}}
\toprule
Experiment & Key result \\
\midrule
Q1: Hook overhead
  & p50 internal-allow = 25.6\,\textmu{}s (+2.7\,\textmu{}s);
    throughput = 31.2\,Gbps \\
Q2: Pivot prevention
  & 82/82 unauthorized attempts blocked \\
Q3: Policy scaling
  & Flat p50 at 28.7\,\textmu{}s with 4\,096 prefixes \\
Q4: Update safety
  & 0 transient allows / 1.24M attempts \\
Q5: nftables comparison
  & ProcRoute 25.6\,\textmu{}s vs.\ nftables 30.3\,\textmu{}s \\
Q6: Revocation
  & UDP epoch deny $\sim$27\,\textmu{}s; TCP sweeper $\sim$11\,ms \\
\bottomrule
\end{tabular}
\end{table}

The loopback results confirm that BPF hook overhead is negligible
(2.7\,\textmu{}s on the internal-allow path), flat across policy
sizes from 4 to 4\,096 prefixes and 5 to 200 principals, and
structurally safe during policy reloads.
ProcRoute blocked all 82 pivot attempts while retaining full
external connectivity, recorded zero transient allows across
1.24~million connection attempts during policy reload, and
revocation is immediate for UDP and sub-15\,ms for TCP.

These cgroup hook costs carry directly into the split
architecture: the client-side \texttt{connect4/6} hooks execute
the same principal-resolution and \texttt{SO\_MARK} stamping
logic, so the 2.7\,\textmu{}s overhead measured here is the
per-connection cost on the client in split mode as well.
The WireGuard evaluation below adds the TC tagging and gateway
enforcement layers on top of this baseline.

\subsection{End-to-end WireGuard deployment}
\label{sec:eval:wireguard}

This section evaluates ProcRoute over a realistic two-machine
WireGuard tunnel similar to an enterprise deployment.

\subsubsection{Throughput and connect latency}
Table~\ref{tab:wg:throughput} summarizes single-stream TCP throughput
(10\,s \texttt{iperf3}, 10 trials) and TCP \texttt{connect()} latency
over the WireGuard tunnel.
All ProcRoute configurations match the WireGuard baseline at
8\,878\,Mbps.
The nftables cgroup-matching configuration achieves only
7\,836\,Mbps, 12\% lower than baseline, due to per-packet netfilter rule evaluation
traversal on the gateway.
Connect latency at p50 is 89.8\,\textmu{}s for the baseline and
93.4--93.7\,\textmu{}s for full enforcement
(+3.6\,\textmu{}s), well within the WireGuard round-trip noise
floor.  The nftables configuration adds 7.5\,\textmu{}s.

\begin{table}[t]
\centering
\caption{WireGuard single-stream throughput and connect latency
  (10 trials each).}
\label{tab:wg:throughput}
\small
\begin{tabular}{lrrrr}
\toprule
Config.\ & Tput & p50 & p95 & p99 \\
        & (Mbps) & (\textmu{}s) & (\textmu{}s) & (\textmu{}s) \\
\midrule
wg\_baseline          & 8\,878 & 89.8 & 98.7 & 100.9 \\
wg\_nftables          & 7\,836 & 97.3 & 100.6 & 101.9 \\
wg\_tag\_only         & 8\,878 & 90.3 & 98.4 & 100.5 \\
wg\_enforce\_nocache  & 8\,878 & 93.4 & 99.0 & 100.7 \\
wg\_enforce\_cache    & 8\,876 & 93.7 & 101.2 & 102.3 \\
\bottomrule
\end{tabular}
\end{table}

\subsubsection{Multi-stream scaling}

Table~\ref{tab:wg:multistream} reports aggregate throughput with
8 and 16 parallel \texttt{iperf3} streams.
At 16 streams ProcRoute enforcement without cache achieves
8\,298\,Mbps, within 2.3\% of the baseline;
with the flow cache the gap is 2.8\%.
The nftables configuration drops to 7\,661\,Mbps, 9.8\% below
baseline, confirming that per-packet netfilter rule evaluation overhead compounds
unfavorably with stream count, which is critical in enterprise environments.

\begin{table}[t]
\centering
\caption{WireGuard multi-stream throughput (Mbps, 10 trials).}
\label{tab:wg:multistream}
\begin{tabular}{lrr}
\toprule
Configuration & 8 streams & 16 streams \\
\midrule
wg\_baseline          & 8\,786 & 8\,497 \\
wg\_nftables          & 7\,780 & 7\,661 \\
wg\_tag\_only         & 8\,780 & 8\,439 \\
wg\_enforce\_nocache  & 8\,757 & 8\,298 \\
wg\_enforce\_cache    & 8\,764 & 8\,260 \\
\bottomrule
\end{tabular}
\end{table}

\subsubsection{Flow-cache effectiveness}

The gateway TC program optionally caches per-flow allow
decisions keyed on (src\_ip, dst\_ip, dst\_port, protocol)
so that subsequent packets skip the full LPM + principal
lookup.
Hit rates vary by workload: many short flows (1\,000
four-packet flows) yield 75\% cache hits, while few long
flows (4 persistent streams) reach $>$99.99\% (650\,908
hits vs.\ 4 misses).
For long flows the cache amortizes nearly all gateway
lookups; for short flows the first packet incurs a full
lookup but subsequent packets hit the cache.

\subsubsection{Gateway policy scaling}

We varied the number of principals (10--500) with 10 prefixes each
and measured gateway startup time and sustained throughput
(Table~\ref{tab:wg:scaling}).
Startup time is flat at $\sim$204\,ms across all sizes; throughput
remains at 8\,855--8\,858\,Mbps regardless of whether the policy
contains 100 or 5\,000 total prefixes.
The LPM trie lookup is $O(\text{prefix length})$ and independent of
map population, consistent with the flat scaling observed in the
single-machine results (Table~\ref{tab:scaling}).

\begin{table}[t]
\centering
\caption{Gateway policy scaling (3 trials).}
\label{tab:wg:scaling}
\begin{tabular}{rrrr}
\toprule
Principals & Total prefixes & Startup (ms) & Tput (Mbps) \\
\midrule
10  & 100   & 204.4 & 8\,858 \\
50  & 500   & 204.5 & 8\,856 \\
100 & 1\,000 & 204.6 & 8\,858 \\
250 & 2\,500 & 204.3 & 8\,857 \\
500 & 5\,000 & 204.8 & 8\,855 \\
\bottomrule
\end{tabular}
\end{table}

\subsubsection{WireGuard revocation}

Table~\ref{tab:wg:revocation} reports revocation latency over the
WireGuard tunnel.  Epoch bumps complete in 190\,\textmu{}s (median),
new connections are blocked within 146\,\textmu{}s, and
steady-state connect latency is 102\,\textmu{}s, all
sub-millisecond.  Flow-cache entries are invalidated atomically on
epoch bump, confirming
that cached allow decisions do not survive revocation.

\begin{table}[t]
\centering
\caption{WireGuard revocation latency (10 trials).}
\label{tab:wg:revocation}
\begin{tabular}{lrr}
\toprule
Metric & Median (\textmu{}s) & Range (\textmu{}s) \\
\midrule
Epoch bump            & 190 & 189--195 \\
New connection block  & 146 & 137--153 \\
Steady-state connect  & 102 & 94--109 \\
\bottomrule
\end{tabular}
\end{table}

\subsubsection{Summary}

The two-machine WireGuard results confirm that ProcRoute's
architecture scales from single-machine loopback to a realistic
encrypted tunnel with negligible overhead.  Client-side cgroup hooks
add 3.6\,\textmu{}s to connect latency; gateway-side TC enforcement
has no measurable throughput impact.  In contrast, the nftables cgroup-matching
configuration incurs a 12\% throughput penalty, validating ProcRoute's
design choice of flow-level tagging over per-packet rule evaluation.
Revocation remains sub-millisecond, and policy scaling is flat to
5\,000 prefixes, consistent with the single-machine results.

\section{Discussion}
\label{sec:discussion}

\subsection{Deployment model}
ProcRoute can be deployed incrementally: (1)~observe-only mode to
log which processes attempt internal access; (2)~enforce for
high-risk prefixes first; (3)~expand per-application allowlists as
needed.
For HTTP/S applications already mediated by a ZT proxy, ProcRoute
restricts direct IP access to internal subnets while allowing
proxy endpoints.
For non-HTTP protocols,
ProcRoute provides least-privilege route gating without protocol
translation.

The WireGuard evaluation
demonstrates that ProcRoute's split client/gateway architecture
operates at line rate over an encrypted tunnel, with negligible
connect-latency overhead and flat policy scaling to 5\,000
prefixes. This confirms that the deployment model is practical for
production quality deployments in enterprise environments.
Although the prototype and evaluation use WireGuard as the
tunnel substrate, the split architecture is not WireGuard-specific:
it requires only (i)~a kernel-visible tunnel interface on which
a TC program can inspect and tag packets, and (ii)~authenticated
encryption that binds inner headers to a peer identity.
Any VPN or overlay that exposes a \texttt{tun}/\texttt{tap}
interface (e.g., OpenVPN, IPsec \texttt{xfrm} interfaces,
Tailscale, Cloudflare WARP) satisfies both conditions, and the
same TC ingress/egress programs could be attached with only
interface-name and peer-identity mapping changes.

\subsection{Future work}
We see three natural extensions: automatic policy synthesis from observed
traffic, DNS-name-aware selectors to reduce over-broad prefix
grants, and cross-platform prototypes on Windows~(WFP callout
drivers) and macOS~(Network Extension content filters).
ProcRoute's access-control model is OS-agnostic; the decision
function requires only a principal-binding mechanism, a synchronous
enforcement hook, and an audit channel, all available on Windows and macOS.

\section{Related Work}

\paragraph{nftables cgroup matching.}
nftables \texttt{socket~cgroupv2} matching~\cite{nftables_wiki_cgroups}
provides basic per-cgroup destination filtering.
ProcRoute contributes a formal model, unified policy surface, monotonic
composition (Theorem~\ref{thm:monotonicity}), and structured
audit.

\paragraph{systemd \texttt{IPAddressAllow/Deny}.}
systemd (v235+) attaches eBPF programs to a unit's cgroup for
IP-prefix filtering~\cite{systemd_resource_control}, limited
to long-running services with static prefixes, and unable to
cover interactive desktop applications, dynamic prefix changes,
or cross-unit policy.

\paragraph{Process confinement.}
SELinux~\cite{selinux},
AppArmor~\cite{apparmor},
Capsicum~\cite{capsicum},
\texttt{seccomp}~\cite{seccomp}, and
Landlock~\cite{landlock} confine processes via labels,
descriptors, or syscall classes but do not express
destination-prefix policies for dynamic split-tunnel
route sets.

\paragraph{Zero trust and eBPF ecosystem.}
ZT architectures~\cite{nist_sp80020r1,beyondcorp} authenticate
users/devices; ProcRoute adds process-level enforcement.
eBPF projects (Cilium~\cite{cilium}, Tetragon~\cite{tetragon})
target containers, not endpoint route authorization.

\section{Conclusion}
Enterprises rely on split tunneling for good reason, but it turns
every device into a potential pivot.  Existing OS mechanisms like
nftables cgroup matching and systemd \texttt{IPAddressAllow} can
approximate per-process filtering, yet they remain low-level
primitives with no formal policy semantics, no composition
guarantees, and no structured audit trail.
ProcRoute fills this gap: by modeling routes as protected resources
and processes as principals governed by a total decision function,
organizations can preserve split-tunnel usability while confining
internal-route access to authorized processes with auditable,
formally grounded guarantees.
Our evaluations
confirm that ProcRoute matches baseline throughput, adds negligible
connect latency, scales flat to thousands of prefixes, and revokes
access in sub-millisecond time, all without per-packet rule
evaluation overhead.

\bibliographystyle{ACM-Reference-Format}
\bibliography{refs}

\appendix

\section{Single-Machine Microbenchmark Details}
\label{sec:appendix:micro}

This appendix provides the full methodology and results for the
six single-machine microbenchmarks summarized in
Section~\ref{sec:eval:micro-summary}.
All experiments ran on the loopback setup described in
Section~\ref{sec:eval:setup}.

\subsection{Q1: Connection-time enforcement overhead}
\label{sec:appendix:q1}

We measured TCP \texttt{connect()} latency under three
conditions (Table~\ref{tab:overhead:e2e}):
(1)~\emph{baseline}, outside ProcRoute (no hook);
(2)~\emph{external-miss}, single LPM miss
(\texttt{127.0.0.1});
(3)~\emph{internal-allow}, full
pipeline (LPM hit $\to$ cgroup-to-principal $\to$
per-app LPM $\to$ port check).
The BPF-only deny overhead (5.7\,\textmu{}s, no TCP
handshake) is reported separately in the nftables
comparison (Section~\ref{sec:appendix:q5}).

\paragraph{Results.}
Table~\ref{tab:overhead:e2e} reports end-to-end
\texttt{connect()} latency and isolates BPF overhead
($\Delta$p50 = condition p50 $-$ baseline p50).
The external-miss path adds 1.8\,\textmu{}s (one LPM
lookup that misses).
The internal-allow path adds 2.7\,\textmu{}s (three map
lookups plus port/protocol check).
Tail latencies (p90, p99) are tight on bare metal: the baseline
p99 is 32\,\textmu{}s.

TCP throughput was 31.2\,Gbps (internal-allow) vs.\ 34.2\,Gbps
baseline (9\% overhead); the hook fires once at \texttt{connect()}
time and imposes no per-packet cost.

\begin{table}
\centering
\caption{TCP \texttt{connect()} latency (\textmu{}s).
  $\Delta$p50 = condition $-$ baseline.}
\label{tab:overhead:bpf}\label{tab:overhead:e2e}\label{tab:overhead}
\begin{tabular}{lrrrr}
\toprule
Condition & p50 & $\Delta$p50 & p90 & p99 \\
\midrule
Baseline       &  22.9 &   ---  &  24.8 &  32.6 \\
External-miss  &  24.7 &  1.8   &  27.1 &  35.1 \\
Internal-allow &  25.6 &  2.7   &  27.2 &  34.9 \\
\bottomrule
\end{tabular}
\end{table}

\subsection{Q2: Pivot prevention}
\label{sec:appendix:q2}

All 82 connection attempts from an unauthorized process (no
principal binding) across six service categories were denied;
Table~\ref{tab:pivot} breaks down the test matrix.

\begin{table}[t]
\centering
\caption{Pivot-prevention test matrix (unauthorized process, all denied).}
\label{tab:pivot}
\small
\begin{tabular}{lp{2.6cm}rl}
\toprule
Service & Target subnet & \# & Proto \\
\midrule
SSH (T1021.004)   & 10.0.0.0/24, 10.250.0.0/24 & 30 & TCP \\
HTTPS (T1071.001) & 10.0.0.0/24             & 20 & TCP \\
RDP (T1021.001)   & 10.0.0.0/24             & 10 & TCP \\
PostgreSQL (T1210)& 10.0.0.0/24, 10.0.1.0/24 & 10 & TCP \\
Alt-HTTP          & 10.0.0.0/24             & 10 & TCP \\
DNS (internal)    & 10.0.0.0/24, 10.0.1.0/24 &  2 & UDP \\
\midrule
\textbf{Total}    &                        & \textbf{82} & \\
\bottomrule
\end{tabular}
\end{table}

\subsection{Q3: Policy-update latency and scaling with policy size}
\label{sec:appendix:q3}

ProcRoute's lifecycle layer
(Section~\ref{sec:lifecycle}) claims that policy changes
take effect by atomically repopulating BPF maps via
\texttt{bpf\_map\_\allowbreak update\_elem}.
We evaluate two aspects: (A)~whether map-population
time remains practical as policy size grows, and
(B)~whether lookup latency in the data-plane hook
degrades with larger maps.

\paragraph{Methodology.}
Synthetic policies with $N \in \{4\ldots4096\}$ prefixes and
$M \in \{5\ldots200\}$ principals (48~configurations).
(A)~Update latency: wall-clock time from daemon launch to all maps
populated (3~trials).
(B)~Lookup latency: 2\,000 TCP \texttt{connect()} iterations
(200~warmup) per configuration (3~trials).
Table~\ref{tab:scaling} reports medians for $M = 5$.

\begin{table}[t]
\centering
\caption{Policy-update latency and internal-allow lookup p50
  vs.\ number of internal prefixes~($N$).}
\label{tab:scaling}
\begin{tabular}{rrr}
\toprule
$N$ (prefixes) & Update (ms) & Lookup p50 (\textmu{}s) \\
\midrule
     4   & 107.0  &  25.5  \\
    16   & 107.1  &  26.1  \\
    64   & 107.1  &  26.4  \\
   256   & 107.3  &  26.2  \\
   512   & 107.5  &  27.3  \\
 1\,024  & 107.4  &  26.7  \\
 2\,048  & 107.2  &  28.4  \\
 4\,096  & 107.2  &  28.7  \\
\bottomrule
\end{tabular}
\end{table}

\paragraph{Results.}
Update latency is constant at $\sim$107\,ms across all
configurations from 4 to 4\,096 prefixes, dominated by
fixed costs (BPF loading, map creation, cgroup provisioning);
the per-entry insertion cost is negligible.
Internal-allow lookup p50 is flat at 25--29\,\textmu{}s from
$N = 4$ to $N = 4096$, confirming that LPM trie lookup time
depends on prefix length (at most 32~bits for IPv4), not on
the number of entries.
Varying the number of principals from 5 to 200 had no
measurable effect on lookup latency (all clean medians within
25--29\,\textmu{}s).
These latencies are well within the budget of a VPN route
renegotiation.

\subsection{Q4: Update safety under load}
\label{sec:appendix:q4}

A tight \texttt{connect()} loop from an unauthorized process ran
through 10 policy-reload cycles, comparing ProcRoute (BPF stays
attached; map entries updated in place) with nftables
(\texttt{nft flush ruleset \&\& nft -f}).

\paragraph{Results.}
ProcRoute recorded zero transient allows across 1\,240\,367
attempts (Table~\ref{tab:update-safety}).
nftables' \texttt{flush ruleset} reverts to default-accept:
3\,085 of 3\,485 attempts (88.5\%) succeeded during the window,
confirming the structural difference stated in
Theorem~\ref{thm:update-safety}.

\begin{table}[t]
\centering
\caption{Transient allows during 10 policy-reload cycles.}
\label{tab:update-safety}
\begin{tabular}{lrr}
\toprule
Mechanism & Total attempts & Transient allows \\
\midrule
ProcRoute (BPF attached)  & 1\,240\,367 & 0 \\
nftables (flush + reload) &     3\,485  & 3\,085 \\
\bottomrule
\end{tabular}
\end{table}

\subsection{Q5: Structural comparison with nftables}
\label{sec:appendix:q5}

We implemented an equivalent nftables configuration (nine
\texttt{socket\allowbreak\,cgroupv2} rules mirroring the
five-principal policy) and ran the same latency
microbenchmark; nftables also denied all 82~pivot attempts
(Table~\ref{tab:pivot}) with comparable median latency
(Table~\ref{tab:nft-comparison}).

\begin{table}[t]
\centering
\caption{p50 TCP \texttt{connect()} latency (\textmu{}s): nftables
  vs.\ ProcRoute.  $\dagger$\,nftables \texttt{drop} causes a
  TCP timeout, not \texttt{EPERM}.}
\label{tab:nft-comparison}
\begin{tabular}{lrr}
\toprule
Condition & nftables & ProcRoute \\
\midrule
Baseline       & 28.6 &  22.9 \\
External-miss  & 29.7 &  24.7 \\
Internal-allow & 30.3 &  25.6 \\
Internal-deny  & $\dagger$ &  5.7 \\
\bottomrule
\end{tabular}
\end{table}

\paragraph{Structural differences.}
(1)~nftables evaluates rules in a linear chain where
ordering determines the verdict; ProcRoute's conjunctive
composition eliminates rule-ordering failures by
construction.
(2)~ProcRoute's deny events include PID, command name,
and principal; nftables logs only packet metadata.
(3)~ProcRoute optionally gates on per-binary SHA-256
verification; nftables has no binary-identity mechanism.

\subsection{Q6: Revocation latency}
\label{sec:appendix:q6}

Table~\ref{tab:revocation} summarizes the two revocation
mechanisms.  After an epoch increment, all 1\,000 subsequent
UDP \texttt{sendto()} calls were denied within
$\sim$27\,\textmu{}s.  The TCP sweeper terminated 50~established
connections in $\sim$11\,ms.  Epoch-check overhead is negligible
(+1.6\,\textmu{}s over baseline).

\begin{table}[t]
\centering
\caption{Revocation latency and epoch-check overhead.}
\label{tab:revocation}
\small
\begin{tabular}{llr}
\toprule
Mechanism & Metric & Value \\
\midrule
UDP epoch     & First deny     & $\sim$27\,\textmu{}s \\
UDP epoch     & Denied/total   & 1\,000/1\,000 \\
TCP sweeper   & 50-sock revoke & $\sim$11\,ms \\
Epoch ovhd.   & p50/p90/p99    & 24.5/27.5/42.9\,\textmu{}s \\
\bottomrule
\end{tabular}
\end{table}

\paragraph{Limitations.}
The current epoch mechanism has two limitations:
(1)~the UDP epoch check applies to all sockets stamped at an
older epoch, not only those whose specific grant was revoked; a
selective per-principal epoch would reduce false revocations in
multi-principal updates;
(2)~TCP revocation relies on a userspace sweep rather than an
in-kernel \texttt{sock\_ops} hook, bounding the worst-case
revocation delay to the sweep interval (1\,s in the prototype).
A production system could combine both: the epoch mechanism for
immediate UDP/first-packet enforcement and a \texttt{sock\_ops}
program for per-segment TCP enforcement.
The destination-disjoint assumption (Section~\ref{sec:lpm-overlaps})
prevents a single principal from holding overlapping prefixes with
different port constraints, e.g., allowing \texttt{10.0.0.0/8}
on port~443 while restricting \texttt{10.1.0.0/16} to port~22.
Enterprise firewall audits suggest that 15--25\,\% of zone-based
rulesets contain such overlapping-prefix, port-differentiated
entries~\cite{wool2004quantitative,wool2010trends}.
This restriction could be lifted by replacing the single LPM
lookup with a \emph{cascaded trie}: for each principal, a
longest-match hit is followed by a secondary hash-map lookup
keyed on $\langle\mathit{prefix},\mathit{proto}\rangle$, adding
one extra map access ({\raise.17ex\hbox{$\scriptstyle\sim$}}40\,ns)
per connection but supporting arbitrary per-prefix port
predicates.

\end{document}